\documentclass{article}
\usepackage{arxiv}

\usepackage[utf8]{inputenc} 
\usepackage[T1]{fontenc}    
\usepackage{hyperref}       
\usepackage{url}            
\usepackage{booktabs}       
\usepackage{amsfonts}       
\usepackage{nicefrac}       
\usepackage{microtype}      
\usepackage{amsthm}
\usepackage{lipsum}
\usepackage{graphicx}
\usepackage{natbib}
\usepackage{amsmath}
\usepackage{tikz}
\usetikzlibrary{arrows.meta}
\usepackage{enumitem}
\usepackage{newtxtext}
\usepackage[subscriptcorrection]{newtxmath}

\graphicspath{ {./images/} }
\theoremstyle{definition}
\newtheorem{assumption}{Assumption}
\newtheorem{remark}{Remark}
\newtheorem{lemma}{Lemma}
\newtheorem{proposition}{Proposition}

\newcommand{\T}{{\mathrm{\scriptscriptstyle T}}}
\newcommand{\indep}{\perp\!\!\!\perp}

\newcommand{\pr}{\operatorname{pr}}

\title{Proximal causal inference through cross-proxy balancing}

\author{
Grace V. Ringlein \\
  Department of Biostatistics\\
  Johns Hopkins Bloomberg School of Public Health\\
  Maryland, USA \\
  \texttt{gringle1@jh.edu} \\
   \And
 Trang Quynh Nguyen \\
  Department of Mental Health\\
  Johns Hopkins Bloomberg School of Public Health\\
  Maryland, USA  \\
  \texttt{trang.nguyen@jhu.edu} \\
  \And
   Elizabeth A. Stuart \\
    Department of Biostatistics\\
  Johns Hopkins Bloomberg School of Public Health\\
  Maryland, USA \\
   \texttt{estuart@jhu.edu} \\
   \And
   Harsh Parikh \\
   Department of Biostatistics \\
   Yale University \\
   Connecticut, USA\\
   \texttt{harsh.parikh@yale.edu} \\
}

\begin{document}
\maketitle
\begin{abstract}
Proximal causal inference identifies causal effects in the presence of unmeasured confounding by drawing on two sets of proxy variables.  Identification typically utilizes bridge functions, defined as solutions to integral equations.
However, the mechanism by which fitted bridge functions correct for confounding bias remains opaque, offering little to interpret, inspect or stress-test. 
We show that the defining equation of a treatment bridge function is already a balance condition, with a cross-proxy form: the weights are functions of the treatment confounding proxies and covariates, and they balance the outcome confounding proxies and covariates (i.e., reweighting the distribution in a particular treatment arm to match the distribution across treatment arms). Several existing identification paths via a treatment bridge function can then be interpreted as providing conditions under which balance on the outcome proxies implies balance on the unobserved confounders, which we call balance propagation. Leveraging this framing, we show that estimation of the treatment bridge function is a type of balancing weight estimation. Finally, we show that an outcome-weighted estimator form can also be obtained for a large class of proximal estimators, including those that utilize an outcome bridge function. Using this framing, we provide conditions under which common estimators are numerically equivalent.
\end{abstract}


\section{Introduction}

Proximal causal inference (PCI) is a framework that allows identification and estimation of causal effects in the presence of unobserved confounders \citep{miao_identifying_2018,tchetgen_tchetgen_introduction_2024}. PCI methods typically leverage two types of auxiliary variables: treatment confounding proxies (assumed to be conditionally independent of the outcome given treatment assignment, covariates, and unobserved confounders) and outcome confounding proxies (assumed to be conditionally independent of the treatment assignment and treatment confounding proxies given covariates and unobserved confounders). 
There are two general paths to identification of causal effects: (i) via an
outcome bridge function \citep{miao_identifying_2018,tchetgen_tchetgen_introduction_2024,kallus_causal_2022} and (ii) via a treatment bridge function \citep{cui_semiparametric_2024,kallus_causal_2022}, which have parallels to standard g-formula and inverse probability weighting (IPW) identification approaches, respectively. These identification results motivate proximal g-computation estimators and proximal IPW estimators, which can be combined into augmented proximal AIPW estimators \citep{tchetgen_tchetgen_introduction_2024,cui_semiparametric_2024,ghassami_minimax_2021,kallus_causal_2022}.  

While some earlier work discusses the intuition behind these bridge functions \citep{miao_confounding_2024,ringlein_demystifying_2025}, the mechanisms underlying identification in PCI 
are still fairly opaque. Identification and estimation through a treatment bridge function in particular has received less attention than the outcome bridge function used in \citet{miao_identifying_2018}. 

In this work, we establish connections between the treatment bridge function and the covariate balancing literature \citep[see][ for a review]{ben-michael_balancing_2021} that provide new interpretations of existing identification results and estimators. We derive implied weights for a broad set of proximal estimators, including those built on an outcome bridge function or both types of bridge functions.  These forms are used to describe a class of estimators where the proximal AIPW estimator reduces to the proximal IPW estimator, up to a term linear in any remaining imbalance. In the special case where bridge functions are assumed to be linear within treatment arms and estimated in a particular way, all three estimators coincide.
\section{Preliminaries}

\subsection{Notation}
The full data consist of $i=1,\dots,n$ independent and identically distributed observations $\{\{Y_i(a): a\in\mathcal{A}\},A_i,U_i,X_i,W_i,Z_i\}$, where $A_i$ is treatment or exposure (with a finite set of treatment levels $a\in\mathcal{A}$), $Y_i(a)$ is the potential outcome under treatment $a$, $U_i$ denotes the unobserved confounders, 
treatment confounding proxies and outcome confounding proxies (formally defined by A.\ref{base}(d)--(e)) are $Z_i$ and  $W_i$, respectively, and the remaining covariates are $X_i$ with support $\mathcal{X}$. The set of square-integrable functions of a variable $V$ (e.g., $U$, $W$, or $Z$) given $a \in \mathcal{A}$ and $x \in \mathcal{X}$ is $\mathcal{L}_{ax}^2(V) = \{g : \int g(v)^2 \,dP_{V\mid A=a,X=x}< \infty\}$.  



\subsection{Base assumptions for proximal causal inference}
\begin{figure}
\centering
\begin{tikzpicture}[
    scale=0.80, transform shape,
    >={Stealth[length=2.5mm]},
    obs/.style={circle, fill=black, text=white, font=\bfseries,
                minimum size=0.9cm, inner sep=0pt},
    unobs/.style={circle, fill=gray!80, draw=black!80, line width=1.2pt,
                  text=white, font=\bfseries,
                  minimum size=0.9cm, inner sep=0pt},
    edge/.style={->, line width=0.8pt}
]
\node[unobs] (U) at (0, 3)    {U};
\node[obs]   (Z) at (-2.4, 2) {Z};
\node[obs]   (W) at (2.4, 2)  {W};
\node[obs]   (A) at (-2.4, 0) {A};
\node[obs]   (Y) at (2.4, 0)  {Y};
\node[obs]   (X) at (0, -0.9) {X};

\foreach \t in {Z, W, A, Y} \draw[edge] (U) -- (\t);
\foreach \t in {U, Z, W, A, Y} \draw[edge] (X) -- (\t);
\draw[edge] (Z) -- (A);
\draw[edge] (W) -- (Y);
\draw[edge] (A) -- (Y);
\end{tikzpicture}
\caption{Causal diagram with treatment $A$, outcome $Y$, observed
confounders $X$, unobserved confounders $U$, treatment confounding proxies $Z$ and outcome confounding proxies $W$, satisfying proximal conditional independence assumptions.}
\label{fig:dag}
\end{figure}
The following assumptions are common across proximal identification strategies \citep{miao_identifying_2018,tchetgen_tchetgen_introduction_2024,cui_semiparametric_2024,miao_confounding_2024,kallus_causal_2022}; we group them into a single assumption for succinctness: 

\begin{assumption}[Base Proximal Assumptions]\label{base}\hfill
\begin{enumerate}[label = \alph*.]
    \item Consistency: $Y_i = Y_i(A_i)$.
    \item Latent positivity: $\pr(A = a \mid U, X)\ge \epsilon$ almost surely, for some $\epsilon>0$ and all $a \in \mathcal{A}$.
    \item Latent unconfoundedness:  $Y(a) \indep A \mid U, X$ for $a \in \mathcal{A}$.
    \item Treatment confounding proxy conditional independence: $Z \indep Y \mid U, A, X.$
    \item Outcome confounding proxy conditional independence: $W \indep (A, Z) \mid U, X$.
\end{enumerate}
\end{assumption}

Figure \ref{fig:dag} is an example causal diagram where A.\ref{base}(c)--(e) are satisfied. Identifying $E\{Y(a)\}$ requires additional assumptions beyond A.\ref{base}. Typically, sufficient conditions are of two types: assumptions about the existence of solutions (called bridge functions) to particular integral equations, and completeness conditions. We explore identification via the less-studied treatment bridge functions and connections to balance in \S \ref{sec_bal}.

\section{The role of balance in identification with a treatment bridge function}\label{sec_bal}

\subsection{A treatment bridge function is a balancing weight by definition}\label{sec_qbal}
Following the notation of \citet{kallus_causal_2022}, $\mathbb{Q}_{obs}(a)$ is the set of ``observed-data'' treatment bridge functions $q_{obs}(Z,a,X)$ within treatment arm $a$, defined as solutions to \eqref{PCIq:eq}, and $\mathbb{Q}_{true}(a)$ is the set of ``true'' treatment bridge functions $q_{true}(Z,a,X)$ satisfying \eqref{PCIqu}.
\begin{align}
    \pr(A=a\mid W,X)^{-1}&=E\{q_{obs}(Z,a,X)\mid W,a,X\} \quad (a\in \mathcal{A}).\label{PCIq:eq}\\
     \pr(A=a\mid U,X)^{-1}&=E\{q_{true}(Z,a,X)\mid U,a,X\} \quad (a\in \mathcal{A}).\label{PCIqu}
\end{align}
There is not any guarantee that either set $\mathbb{Q}_{true}(a)$ or $\mathbb{Q}_{obs}(a)$ is non-empty. From here on, we abbreviate $\mathbb{Q}_{obs}(a)$ as $\mathbb{Q}_{obs}$ (and the same for $\mathbb{Q}_{true}$), leaving the particular arm $a$ implicit. 

Rearranging equations~\eqref{PCIq:eq} and \eqref{PCIqu} (see \S\ref{s:rearrange}) exposes the connection to balancing weights:
\begin{align}
E\{k(W,X)\}&=E\{I(A=a)\,q_{obs}(Z,a,X)\,k(W,X)\},\label{PCIqw_bal}\tag{\protect\ref{PCIq:eq}*}\\
E\{m(U,X)\}&=E\{I(A=a)\,q_{true}(Z,a,X)\,m(U,X)\},\label{PCIqu_bal}\tag{\protect\ref{PCIqu}*}
\end{align}
for any $k$ and $m$ such that expectations on both sides of the respective equations exist. Equations~\eqref{PCIqw_bal} and \eqref{PCIqu_bal} have the same general form as the standard balancing condition $$E\{I(A=a)\rho(a,X) k({X})\}=E\{k({X})\}$$ used in settings where unconfoundedness conditional on observed covariates, $Y(a) \indep A \mid X$, is assumed (combined with consistency and positivity conditional on $X$). 
In \eqref{PCIqw_bal}, $q_{obs}$ is a function of $Z,X$ for which the weighted mean of any function $k(W,X)$ within treatment arm $a$ (weighted by $q_{obs}$) equals the marginal (unweighted) mean of $k(W,X)$. We refer to \eqref{PCIqw_bal} (and equivalently, eq. \ref{PCIq:eq}) as a balancing condition that ``$q_{obs}$ balances $W,X$'', and to \eqref{PCIqu_bal} (and eq. \ref{PCIqu}) as the condition that ``$q_{true}$ balances $U,X$''.
A difference from the standard balancing condition is that  \eqref{PCIqw_bal} has what we refer to as cross-proxy structure: the weights $q_{obs}(Z,a,X)$ are a function of $Z,X$ but they balance $W,X$, while standard balancing weights $\rho(a,X)$ are functions of the same variables they balance ($X$). 

Under A.\ref{base}(e) ($W\indep (Z,A)\mid U,X$), any $q_{true}$ is also a $q_{obs}$, that is, $\mathbb{Q}_{true}\subseteq\mathbb{Q}_{obs}$ (see \S\ref{s:qtrueqobs}): thus $q_{true}$ also balances $W,X$. However, $\mathbb{Q}_{obs}$ and $\mathbb{Q}_{true}$ are not generally equivalent; without further assumptions there is no guarantee that $q_{obs}$ balances $U,X$ (i.e., solves \eqref{PCIqu}). In the next subsection, we discuss identification approaches that provide conditions under which any $q_{obs}\in \mathbb{Q}_{obs}$ does balance $U,X$; we refer to this as balance propagation (formal definition to follow).

\subsection{Identification under balance propagation}\label{sec_bp}

We consider identification under the assumption that at least one $q_{true}$ exists:
\begin{assumption}\label{PCIq}
    There is a solution to \eqref{PCIqu}: $\mathbb{Q}_{true}\not=\emptyset$.
\end{assumption}

\noindent If such a $q_{true}$ were known, under A.\ref{base}(a)--(d), $E\{Y(a)\}$ would be identified \citep[see \S \ref{s:qtrueid} or][]{cui_semiparametric_2024}:
\begin{align}\label{eq:q_true_id}
    E\{I(A=a)q_{true}(Z,a,X)Y\}=E\{Y(a)\},\quad (\forall q_{true}\in \mathbb{Q}_{true}).
\end{align}
As elements of $\mathbb{Q}_{true}$ are generally unknown, we rely on a $q_{obs}$ instead for identification. A.\ref{PCIq} implies the existence of at least one $q_{obs}$ (because $\mathbb{Q}_{true}\subseteq\mathbb{Q}_{obs}$). Identification proceeds by showing that \eqref{q_obs_id} (below) holds under additional assumptions: 
\begin{align}
    E\{I(A=a)q_{obs}(Z,a,X)Y\}&=E\{Y(a)\}, \quad (\forall q_{obs}\in \mathbb{Q}_{obs}).\label{q_obs_id}
\end{align}

Using \eqref{eq:q_true_id}, \eqref{q_obs_id} can be written in the equivalent form \eqref{q_id_condition}:
\begin{align}
    E[I(A=a)\{q_{obs}(Z,a,X)-q_{true}(Z,a,X)\}Y]&=0, \quad \{
    \forall (q_{obs},q_{true})\in \mathbb{Q}_{obs}\times\mathbb{Q}_{true}\}.\label{q_id_condition}
\end{align}
To see the role of balance propagation in identification, we can rewrite \eqref{q_id_condition} as (see \S\ref{s:ucond}):
\begin{align}\label{bal_prop_id}
E[\pr(A=a\mid U,X)\,E\{q_{obs}(Z,a,X)-q_{true}(Z,a,X)\mid U,a,X\}\,E(Y\mid U,a,X)]=0.
\end{align}
Thus under A.\ref{base} and A.\ref{PCIq}, identification in the form of \eqref{q_id_condition} holds if and only if \eqref{bal_prop_id} holds for every pair $(q_{obs},q_{true})\in\mathbb{Q}_{obs}\times\mathbb{Q}_{true}$. Clearly, one way for \eqref{bal_prop_id} to hold is if the middle term $E\{q_{obs}(Z,a,X)-q_{true}(Z,a,X)\mid U,a,X\}=0$ for all pairs $(q_{obs},q_{true})$. Formally: 
\begin{assumption}[Balance propagation]\label{a_balprop}
For $a\in\mathcal{A}$:
\begin{equation}\label{eq:bal_prop1}
    E\{q_{obs}(Z,a,X)\mid U,A=a,X\}=\pr(A=a\mid U,X)^{-1}\text{ a.s.} ,\quad (\forall q_{obs}\in \mathbb{Q}_{obs}).
\end{equation}
\end{assumption}

\noindent Using the form of $q_{true}$ \eqref{PCIqu} on the RHS and rearranging, \eqref{eq:bal_prop1} can be written as:
$$E\{q_{obs}(Z,a,X)-q_{true}(Z,a,X)\mid U,A=a,X\}=0 \text{ a.s.,} \quad \{\forall (q_{obs},q_{true})\in \mathbb{Q}_{obs}\times\mathbb{Q}_{true}\}.$$
Thus, under A.\ref{base}, A.\ref{PCIq}, and A.\ref{a_balprop}, $E\{Y(a)\}$ is identified by \eqref{q_obs_id}. We call A.\ref{a_balprop} balance propagation because, by \S\ref{sec_qbal}, it says that any $q_{obs}$ (which balances $W,X$ by definition) also balances $U,X$ ($\mathbb{Q}_{obs}\subseteq\mathbb{Q}_{true}$). As $\mathbb{Q}_{true}\subseteq\mathbb{Q}_{obs}$, A.\ref{a_balprop} is equivalent to $\mathbb{Q}_{obs}=\mathbb{Q}_{true}$. 

The two identification paths in \citet{cui_semiparametric_2024} use assumptions which can be viewed as sufficient conditions for A.\ref{a_balprop}, though neither their connection to A.\ref{a_balprop} nor the balancing interpretation appears in that work. 
The first of those two paths \citep[Theorem 2.2 in][]{cui_semiparametric_2024} uses a completeness condition (A.\ref{compUW}, below): a requirement on how richly $W$ varies with $U$ in arm $a$, for any strata of $X$. (For discrete variables it requires $W$ to have at least as many support points as $U$.)  
\begin{assumption}[Completeness of $W$ for $U$ given $A,X$]\label{compUW} For $a\in\mathcal{A}$ and  $P_{X}$-almost every $x\in \mathcal{X}$:
(a) for all 
 $g\in \mathcal{L}^2_{ax}(U)$: $E\{g(U) \mid W,A=a, X=x\} = 0 \text{ a.s.} \implies$ $g(U) = 0 \text{ a.s.}$, and (b) $E\{q_{obs}(Z,a,x)\mid U,a,x\}\in \mathcal{L}^2_{ax}(U)$ for all $q_{obs}\in \mathbb{Q}_{obs}.$
\end{assumption}
\noindent A.\ref{compUW}(a) can be applied to $E[E\{q_{obs}(Z,a,x)-q_{true}(Z,a,x)\mid U,a,x\}\mid W,a,x]$=0 (from $\mathbb{Q}_{true}\subseteq \mathbb{Q}_{obs}$ and A.\ref{base}) to obtain A.\ref{a_balprop} (see \S\ref{s:comp_implies_balprop}), provided that $E\{q_{obs}(Z,a,x)-q_{true}(Z,a,x)\mid U,a,x\}$ is square integrable: hence, why (b) is included. 
This square integrability condition is implicit in \citet{cui_semiparametric_2024} in their use of A.\ref{compUW}(a). We discuss this further in \S\ref{s:q_true_Lsqr}.

\begin{remark}
    Technically, this ``main'' identification path described in \citet{cui_semiparametric_2024} assumes that $\mathbb{Q}_{obs}\not=\emptyset$, rather than A.\ref{PCIqu} ($\mathbb{Q}_{true}\not=\emptyset$). Under A.\ref{compUW}, these are equivalent. 
\end{remark}


The second identification path \citep[Remark 5,][]{cui_semiparametric_2024} rests on A.\ref{base}, A.\ref{PCIqu}, and a condition (A.\ref{compZW}, below) that makes $q_{obs}$ unique (i.e., $\mathbb{Q}_{obs}=\{q_0\}$). Thus, by $\mathbb{Q}_{true}\subseteq \mathbb{Q}_{obs}$ we have $\mathbb{Q}_{true}=\mathbb{Q}_{obs}=\{q_0\}$. Therefore, $q_{obs}-q_{true}=q_{0}-q_{0}=0$ and A.\ref{a_balprop} holds. 

\begin{assumption}[Completeness of $W$ for $Z$ given $A,X$]\label{compZW}
For $a\in\mathcal{A}$ and  $P_{X}$-almost every $x\in \mathcal{X}$: (a) $\forall g\in \mathcal{L}_{ax}^2(Z)$: $E\{g(Z) \mid W,a,x\} = 0$ a.s. $\implies g(Z) = 0$ a.s., and (b) $\mathbb{Q}_{obs}\subseteq\mathcal{L}_{ax}^2(Z).$
\end{assumption}
\begin{remark}
    The insight that both paths of \citet{cui_semiparametric_2024} rely on balance propagation (A.\ref{a_balprop}) does not necessarily provide a condition that can be any more meaningfully discerned in practice than A.\ref{compUW} or A.\ref{compZW}. However, it does help to clarify the role of various completeness conditions used in \citet{cui_semiparametric_2024} and allows us to more easily compare those identification strategies to that of \citet{kallus_causal_2022} (which does not rely on balance propagation), discussed in the next section. 
\end{remark}

\subsection{Identification with a treatment bridge function without balance propagation}\label{sec_nobp}

Balance propagation is not the only route to identification with a treatment bridge function; however, the alternative in \citet{kallus_causal_2022} (their Theorem 1.2) also has a connection to balance, which we explore here. Starting from another equivalent form of \eqref{q_id_condition} (see \S \ref{s:id_z_condtion}): 
\begin{align}\label{bal_prop_idZ}
E\big\{I(A=a)\{q_{obs}(Z,a,X)-q_{true}(Z,a,X)\}E(Y\mid Z,a,X)\big\}=0,
\end{align}
identification along this route rests (still) on A.\ref{base} and A.\ref{PCIq}, now with the following assumption:
\begin{assumption}\label{PCIh}
    An outcome bridge function $h_{obs}$ exists, solving
\begin{equation}\label{PCIh:eq}
E\{Y\mid Z,a,X\}=E\{h_{obs}(W,a,X)\mid Z,a,X\} \quad (a\in\mathcal{A}).
\end{equation}
\end{assumption}
\noindent Under A.\ref{PCIh}, we may replace $E(Y\mid Z,a,X)$ in \eqref{bal_prop_idZ} by $E\{h_{obs}(W,a,X)\mid Z,a,X\}$.
Identification is then a direct consequence of the balance condition \eqref{PCIqw_bal}: if $h_{obs}(W,a,X)$ exists, any $q_{obs}$ balances it by definition (take $k(W,X)=h_{obs}(W,a,X)$ in \eqref{PCIqw_bal}), and so does $q_{true}$ (because $\mathbb{Q}_{true}\subseteq\mathbb{Q}_{obs}$); therefore \eqref{bal_prop_idZ} holds (see \S \ref{s:kmu} for details). 
Thus, while differing in exact assumptions used, the treatment bridge function identification paths of \citet{cui_semiparametric_2024} and \cite{kallus_causal_2022} fundamentally rely on the role of the treatment bridge functions as balancing weights. However, \cite{kallus_causal_2022} requires that one class of functions ($h_{obs}(W,a,X)$) be balanced by $q_{obs}$, while \citet{cui_semiparametric_2024} requires that all functions of $U,X$ be balanced by $q_{obs}$.



\section{On the implied weights of proximal IPW, g-computation, and AIPW estimators }
\subsection{The treatment bridge function is estimated as a balancing weight}\label{sec_pipw_est}

The proximal IPW estimator based on \eqref{q_obs_id}, denoted $\hat{\mu}_q(a)$, naturally takes the form of an outcome weighted estimator, with $\hat\omega_{q,i}(a)=n^{-1}I(A_i=a)\hat q_{obs}(Z_i,a,X_i)$:
\begin{align}\label{mu_q}
    \hat\mu_{q}(a) &=n^{-1}\sum_{i=1}^n I(A_i=a)\hat q_{obs}(Z_i,a,X_i)Y_i=\sum_{i=1}^n\hat\omega_{q,i}(a)Y_i.
\end{align}

Using matrices and vectors simplifies notation. Let $\hat\omega(a)$ denote a vector of outcome weights for units $i=1,\dots,n$, with transpose $\hat\omega(a)^{\T}$; let $Y=(Y_1,\dots,Y_n)^{\T}$; let $\hat q(a)$ be the $n$-vector with entries $\hat q_{obs}(Z_i,a,X_i)$; let $D(a)$ be the $n\times n$ diagonal matrix with $I(A_i=a)$ on the diagonal and zero elsewhere; and let $\mathbb{I}_n$ be the $n\times n$ identity matrix and $1_n$ the $n$-vector of ones. Then: $\hat \omega_q(a)^{\T}=n^{-1}\hat q(a)^{\T}D(a)$. Weights for units not in arm $a$ are 0; note the trailing $D(a)$.

What remains is to estimate $\hat q(a)$. Begin by considering a semiparametric setting where $q_{obs}$ is assumed to have a particular parametric form (e.g., $q(Z_i,a,X_i;\phi)=1+\exp(\phi_{a0} +\phi_{az}Z_i+\phi_{ax}X_i)$ for binary $A$).
Estimation is then based on the empirical analogue of \eqref{PCIqw_bal}:
\begin{equation}\label{eq:q-moments}
n^{-1}\sum_{i=1}^nI(A_i=a)\,q(Z_i,a,X_i)\,k_q^p(W_i,X_i)= n^{-1}\sum_{i=1}^n k_q^p(W_i,X_i) \quad (a\in\mathcal{A}),
\end{equation}
where $k_q^p(W_i,X_i)$ is  a vector of $p$ functions of $W_i,X_i$ to balance (e.g., $(1,W_i^\T,X_i^\T)^\T$). If $k_q^p(W_i,X_i)$ includes an intercept, \eqref{eq:q-moments} enforces that the weights sum to 1 (up to some error tolerance): $\hat \omega_q(a)^{\T}1_n=n^{-1}\hat q(a)^{\T}D(a)1_n\approx1$. As weights for units not in arm $a$ are exactly zero, this means the weights in arm $a$ alone sum to (approximately) 1. Weights $\hat \omega_{q,i}(a)$ are generally not constrained to be non-negative, though assuming a specific parametric form like $q(Z_i,a,X_i;\phi)=1+\exp(\phi_{a0} +\phi_{az}Z_i+\phi_{ax}X_i)$  may enforce this.

In the special case where a linear parametric form $q(Z_i,a,X_i)=\phi_{a0} + \phi_{az}^\T Z_i + \phi_{ax}^\T X_i$ is assumed, and $k_q^p(W_i,X_i)=(1,W_i^\T,X_i^\T)^\T$ is balanced, and an invertibility condition requiring dim($W_i$)=dim($Z_i$) is satisfied, \eqref{eq:q-moments} can be solved exactly; we present this example in \S\ref{sec_dr_q_equiv}. 

Generally, \eqref{eq:q-moments} may be  solved as an optimization problem, 
minimizing a loss function related to the imbalance ($r_n=n^{-1}(\hat q(a)^{\T}D(a)K_{wx}-1_n^{\T}K_{wx})$, where $K_{wx}$ is a matrix with rows $k_q^p(W_i,X_i)^{\T}$), potentially subject to a penalty on the complexity of the weights, as similarly done in estimating balancing weights in the standard setting \citep{ben-michael_balancing_2021}. 
For example, \citet{cui_semiparametric_2024} suggest estimation via generalized method of moments, which finds the solution that minimizes a quadratic form in the imbalance $r_n$ as the loss function. Balancing interpretations are not limited to semiparametric methods; 
see \S \ref{s:lin_h} for discussion.


\subsection{On balance checks with the estimated treatment bridge function} In typical propensity score estimation approaches that do not explicitly target balance (e.g., estimating $\pr(A=1\mid X)$ via logistic regression), balance is a consequence of well-estimated weights, and post hoc balance checks with the estimated weights are recommended. When balance on a specific set of factors is targeted in estimating the weights, balance holds by construction for these factors. While one can still conduct checks \citep[e.g., on higher order moments not included in the set of terms that are balanced, as in][]{chattopadhyay_implied_2021}, moments believed relevant to confounding should be included as terms to balance \citep{hainmueller_entropy_2012}. 

Similarly, $\hat q(a)$ estimation targets balance on $W$ and $X$ so balance is satisfied by construction; still, we recommend conducting checks (i.e., computing $r_n$) to verify that the estimate of $\hat q(a)$ has converged properly. We recommend checking balance specifically on the outcome bridge function $\hat h_{obs}(W,a,X)$ when it has also been estimated, as the ability of $q_{obs}(Z,a,X)$ to balance $h_{obs}(W,a,X)$ is fundamental in identification (see \S\ref{sec_nobp}). That said, the utility of this check still relies on  $\hat h_{obs}(W,a,X)$ being well estimated (e.g., parametric model is correctly specified). All of the identification methods discussed still rely on fundamentally untestable assumptions. Even if the estimated $\hat q_{obs}(a)$ does a good job of balancing functions of $W,X$, whether balance propagates to balance on functions of $U,X$ (A.\ref{a_balprop}) cannot be checked. Neither can sufficient conditions for balance propagation (A.\ref{compUW}, A.\ref{compZW}), in general.  Finally, it should be clarified that $\hat{q}(a)$ aims to balance $W,X$, not $Z$, so any balance checks done should use functions of the outcome confounding, but not treatment confounding proxies. 

\subsection{Outcome-weighted estimator forms}\label{sec_ow}

So far, we have focused primarily on identification with a treatment bridge function; however, an observed-data outcome bridge function $h_{obs}(W,a,X)$ defined as solving \eqref{PCIh:eq} provides an alternate identification route. In short, under different sets of identifying conditions (involving $h_{true}(W,a,X)$ solving $E\{Y\mid U,a,X\}=E\{h_{true}(W,a,X)\mid U,a,X\}$),  the potential outcome mean can be identified by the proximal g-formula $E\{Y(a)\}=E\{h_{obs}(W,a,X)]$ \citep{miao_identifying_2018,tchetgen_tchetgen_introduction_2024}. See \S\ref{s:hmirror} for further discussion. This motivates proximal g-computation estimators of the form $\hat\mu_{h}(a) =n^{-1}\sum_{i=1}^n \hat h_{obs}(W_i,a,X_i)$ \citep{tchetgen_tchetgen_introduction_2024} and proximal AIPW estimators combining both functions \citep{cui_semiparametric_2024}: $\hat \mu_{qh}(a)=n^{-1}\sum_{i=1}^n\{I(A=a)\hat q_{obs}(Z,a,X)[Y_i-\hat h_{obs}(W_i,a,X_i)]+\hat h_{obs}(W_i,a,X_i)\}$.

Many estimators of $E\{Y(a)\}=\mu(a)$ can be written as outcome-weighted estimators: $\hat\mu(a)=\sum_{i=1}^n\hat\omega_i(a)Y_i$, where $\hat\omega_i(a)$ is the outcome weight for unit $i$ \citep{knaus_treatment_2024}, called the ``implied'' outcome weight when the estimator does not natively take an outcome weighted form \citep{chattopadhyay_implied_2021}. The proximal IPW estimator $\hat{\mu}_q(a)$ \eqref{mu_q} is naturally in this form with $\hat \omega_q(a)^{\T}=n^{-1}\hat q(a)^{\T}D(a)$ (\S\ref{sec_pipw_est}). 
The proximal g-computation estimator $\hat \mu_h(a)$  and proximal AIPW estimator $\hat \mu_{qh}(a)$ are also outcome-weighted estimators when $\hat h_{obs}(W_i,a,X_i)$ takes the form of a weighted average of outcomes across all units. That is, if $\hat h(W_i,a,X_i)=\hat s_i(a)^{\T}Y$, then collecting the $\hat s_i(a)^{\T}$ as the rows of an $n\times n$ outcome-smoother matrix $\hat S(a)$: 
\begin{align}
    \hat\mu_{h}(a) &=n^{-1}1_n^{\T}\hat S(a)Y=\hat\omega_h(a)^{\T}Y\label{mu_h}\\
\hat\mu_{qh}(a) 
    &=n^{-1}\{\hat q(a)^{\T}D(a)(\mathbb{I}_n - \hat S(a)) + 1_n^{\T}\hat S(a)\}Y=\hat\omega_{qh}(a)^{\T}Y.\label{mu_dr}
\end{align}
Thus, $\hat\omega_{qh}(a)^\T=\hat \omega_q(a)^{\T}- \hat \omega_q(a)^{\T}\hat S(a) + \hat \omega_h(a)^{\T}$.
Whether $\hat h(a)$ (abbreviating the vector of fitted values $\hat h_{obs}(W_i,a,X_i)$) has the form $\hat h(a)=\hat S(a)Y$ depends on how it is estimated. We show that several common methods satisfy this form: \S\ref{sec_dr_q_equiv} addresses when $h(W,a,X;\eta)$ is assumed to be linear in each arm and \S\ref{s:lin_h} provides additional examples. 

Assuming a parametric form of $h(W,a,X;\eta)$, the outcome bridge $\hat h(a)$ can be estimated from the empirical version of \eqref{PCIh:eq}. Letting $K_{zx}$ be a matrix with rows $k_h^p(Z_i,X_i)^\T$:
 \begin{equation}\label{eq:h-moments}
\{Y-h(a)\}^{\T}D(a)K_{zx}=0
\end{equation}


Equation \ref{eq:h-moments} can in some cases be solved directly (see \S\ref{sec_dr_q_equiv}), or approximated via generalized method of moments \citep{cui_semiparametric_2024,miao_confounding_2024}. Nonparametric estimation approaches can also be used and may satisfy the linear smoother form (see \S\ref{s:lin_h}).  
 
Thus, while $\hat q(a)$ is fitted so that functions of $(W,X)$ are balanced within  arm $a$ to the marginal mean, the outcome bridge function
$\hat h(a)$ is fitted by making the outcome residual orthogonal to functions of $(Z,X)$ within arm $a$. The proximal AIPW estimator combines the two. 


\begin{remark}\label{remark_ESS}
    
The effective sample size ($\text{ESS}=\{(\sum_{i=1}^n\hat\omega_i(a))^2\}\{\sum_{i=1}^n\hat\omega_i(a)^2\}^{-1}$, derived under assumptions such as homoscedasticity of outcomes) 
 can be used to reflect the loss of precision when weights are largely concentrated on a small number of units \citep{kish_survey_1965}. The weighting forms derived above suggest that ESS could similarly be used to quantify (the inverse of) variance inflation for proximal estimators; exploring this and other diagnostics from standard weighting methods for use in proximal causal inference is a direction of future work.
\end{remark}
\subsection{Equivalence of semiparametric estimators under particular conditions}\label{sec_dr_q_equiv}

In the standard setting, the augmented weighting estimator reduces to the weighting estimator when the outcome model is linear in the balanced functions \citep{sloczynski_covariate_2025}. 
We provide conditions for the analogous result for the proximal setting: when the estimated outcome bridge function lies in the span of the functions that $\hat q(a)$ is estimated to balance, the proximal AIPW estimator reduces to the proximal IPW estimator, up to a term linear in any imbalance. Formally:
\begin{proposition}\label{prop:collapse}
    Let $\hat q(a)$ be estimated to approximately solve the balance condition \eqref{eq:q-moments} with $K_{wx}$. If $\hat h(a)$ lies in the span of $K_{wx}$, such that $\hat h(a)=K_{wx}\hat\eta_a$, the corresponding proximal AIPW estimator $\hat\mu_{qh}(a)$ can be written as: $\hat\mu_{qh}(a) = \hat\mu_{q}(a) -r_n\hat\eta_a$, where $r_n=n^{-1}(\hat q(a)^{\T}D(a)K_{wx}-1_n^{\T}K_{wx})$.  If  \eqref{eq:q-moments} can be solved exactly (i.e., $r_n=0$), the estimators are numerically equivalent. 
\end{proposition}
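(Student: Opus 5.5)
The argument is a direct algebraic manipulation of the AIPW estimator written in vector form. I will not use the outcome-weight representation \eqref{mu_dr}, since that requires $\hat h(a)=\hat S(a)Y$. Instead I start from the definition of $\hat\mu_{qh}(a)$ in terms of the fitted-value vector $\hat h(a)$:
\[
\hat\mu_{qh}(a)=n^{-1}\hat q(a)^{\T}D(a)\{Y-\hat h(a)\}+n^{-1}1_n^{\T}\hat h(a).
\]
This holds for any estimated $\hat h(a)$, regardless of whether it is a linear smoother in $Y$.

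The first term splits into $n^{-1}\hat q(a)^{\T}D(a)Y$, which is exactly $\hat\mu_q(a)$ by \eqref{mu_q}, minus $n^{-1}\hat q(a)^{\T}D(a)\hat h(a)$. Next I substitute the span assumption $\hat h(a)=K_{wx}\hat\eta_a$ into both remaining terms and collect them:
\[
\hat\mu_{qh}(a)=\hat\mu_q(a)-n^{-1}\bigl\{\hat q(a)^{\T}D(a)K_{wx}-1_n^{\T}K_{wx}\bigr\}\hat\eta_a .
\]
The bracket divided by $n$ is precisely the imbalance row vector $r_n$ defined in the statement, which gives $\hat\mu_{qh}(a)=\hat\mu_q(a)-r_n\hat\eta_a$.

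For the final claim, exact solution of \eqref{eq:q-moments} with $k_q^p$ equal to the columns of $K_{wx}$ means $r_n=0$ identically. The correction term then vanishes and the two estimators coincide numerically.

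No step is a real obstacle. The only points needing care are bookkeeping. First, $D(a)$ must multiply $K_{wx}$ in the weighted term but not in the marginal term, which matches the placement of the indicator in the definition of $r_n$. Second, the dimensions must be consistent: $r_n$ is $1\times p$ and $\hat\eta_a$ is $p\times1$, so $r_n\hat\eta_a$ is a scalar. Third, the span condition must be applied to the fitted values at all $n$ units, including those not in arm $a$, because the g-computation term $n^{-1}1_n^{\T}\hat h(a)$ averages over every unit.
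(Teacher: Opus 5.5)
Your proof is correct and is essentially the paper's argument: you substitute $\hat h(a)=K_{wx}\hat\eta_a$ into the AIPW estimator and collect the weighted and marginal terms into $r_n\hat\eta_a$. The paper reaches the same point in two steps, first rewriting $n^{-1}\hat q(a)^{\T}D(a)K_{wx}$ as $r_n+n^{-1}1_n^{\T}K_{wx}$ and then cancelling against $\hat\mu_h(a)$, and it starts from the smoother form \eqref{mu_dr}. Starting instead from the fitted-value form, as you do, is a small but genuine improvement, since it shows the result does not need $\hat h(a)=\hat S(a)Y$.
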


\begin{proof}
    Plugging $\hat h(a)=K_{wx}\hat\eta_a$ into the form of the proximal AIPW estimator \eqref{mu_dr} gives $\hat\mu_{qh}(a) = \hat\mu_{q}(a) - n^{-1}\hat q(a)^{\T}D(a)K_{wx}\hat\eta_a + \hat\mu_h(a).$ Then, plug in the rearranged definition of the imbalance $n^{-1}\hat q(a)^{\T}D(a)K_{wx}=r_n+n^{-1}1_n^{\T}K_{wx}$, to get $\hat\mu_{q}(a) - n^{-1}1_n^{\T}K_{wx}\hat\eta_a -r_n\hat\eta_a + \hat\mu_h(a)$. Simplifying via $\hat\mu_h(a)=n^{-1}1_n^{\T}K_{wx}\hat\eta_a$  yields the proposed form.
\end{proof}

An example is when a linear form for $h(W,a,X)=\eta_{a0}+\eta_{aw}^\T W+\eta_{ax}^\T X$ is assumed (or similarly, linear across arms $h(W,A,X)=\eta_{0}+\eta_{a}A+\eta_{w}^\T W+\eta_{x}^\T X$; see \S\ref{s:lin_h}) and $q$ is estimated via \eqref{eq:q-moments} with $k_q^p(W_i,X_i)=(1, W_i^\T,X_i^\T)^{\T}$ such that the means of $W$ and $X$ are balanced (regardless of the parametric form of $q(Z,a,X)$ that is assumed). A further special case is when $q(Z,a,X;\phi)$ is also assumed to be linear and \eqref{eq:q-moments} can be solved exactly. Then, the weights $\hat\omega_{q}(a)$ are the same as the implied weights of the other two estimators $\hat\omega_{h}(a)$ and $\hat\omega_{qh}(a)$ under the assumption that $h(W,a,X)$ is linear in each arm (estimated via \eqref{eq:h-moments} with $k_h^p(Z_i,X_i)=(1,Z_i^{\T},X_i^{\T})^{\T}$). Formally: \begin{proposition}\label{prop:linear_pipw}
Let $\mathbb{K}_{zx}$ be the matrix with rows $(1,Z_i^\T,X_i^\T)$ and $\mathbb{K}_{wx}$ the matrix with rows $(1,W_i^\T,X_i^\T)$. Suppose $q(a)=\mathbb{K}_{zx}\phi_a$, \eqref{eq:q-moments} is solved with $\mathbb{K}_{wx}$ and $\mathbb{K}_{zx}^{\T}D(a)\mathbb{K}_{wx}$ is invertible (requiring dim($W$)=dim($Z$)). Additionally, suppose $h(a) = \mathbb{K}_{wx}\eta_a$ and \eqref{eq:h-moments} is solved with $\mathbb{K}_{zx}$. The corresponding proximal IPW, g-computation, and AIPW estimators of $E\{Y(a)\}$ (derived in \S \ref{s:linear_weights}) can all be written in the outcome weighted form $\hat\mu_{\text{lin}}(a)=\hat\omega_{\text{lin}}(a)^{\T} Y$ with the same implied weights 
$\hat\omega_{\text{lin}}(a)^{\T}=n^{-1}1_n^{\T}\mathbb{K}_{wx}(\mathbb{K}_{zx}^{\T}D(a)\mathbb{K}_{wx})^{-1}\mathbb{K}_{zx}^{\T}D(a).$
\end{proposition}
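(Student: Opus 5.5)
We compute the three implied weight vectors explicitly, show they agree, and use Proposition~\ref{prop:collapse} for the AIPW estimator.

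\textbf{IPW weights.} Substituting $q(a)=\mathbb{K}_{zx}\phi_a$ into \eqref{eq:q-moments} with $K_{wx}=\mathbb{K}_{wx}$ gives
\[
n^{-1}\phi_a^{\T}\mathbb{K}_{zx}^{\T}D(a)\mathbb{K}_{wx}=n^{-1}1_n^{\T}\mathbb{K}_{wx}.
\]
This is a square linear system in $\phi_a$, since $\mathbb{K}_{zx}^{\T}D(a)\mathbb{K}_{wx}$ is $(1+\dim Z+\dim X)\times(1+\dim W+\dim X)$ and is square when $\dim W=\dim Z$. By the invertibility assumption it has the unique exact solution
\[
\hat\phi_a^{\T}=1_n^{\T}\mathbb{K}_{wx}(\mathbb{K}_{zx}^{\T}D(a)\mathbb{K}_{wx})^{-1}.
\]
Hence $\hat q(a)^{\T}=\hat\phi_a^{\T}\mathbb{K}_{zx}^{\T}$. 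Since $\hat\omega_q(a)^{\T}=n^{-1}\hat q(a)^{\T}D(a)$ from \S\ref{sec_pipw_est}, this is exactly $\hat\omega_{\text{lin}}(a)^{\T}$, and $r_n=0$.

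\textbf{g-computation weights.} Substituting $h(a)=\mathbb{K}_{wx}\eta_a$ into \eqref{eq:h-moments} with $K_{zx}=\mathbb{K}_{zx}$ gives
\[
\mathbb{K}_{zx}^{\T}D(a)Y=\mathbb{K}_{zx}^{\T}D(a)\mathbb{K}_{wx}\eta_a,
\]
after transposing. The same invertible matrix appears here, so
\[
\hat\eta_a=(\mathbb{K}_{zx}^{\T}D(a)\mathbb{K}_{wx})^{-1}\mathbb{K}_{zx}^{\T}D(a)Y.
\]
Thus $\hat h(a)=\hat S(a)Y$ with smoother $\hat S(a)=\mathbb{K}_{wx}(\mathbb{K}_{zx}^{\T}D(a)\mathbb{K}_{wx})^{-1}\mathbb{K}_{zx}^{\T}D(a)$. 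By \eqref{mu_h}, $\hat\omega_h(a)^{\T}=n^{-1}1_n^{\T}\hat S(a)$, which again equals $\hat\omega_{\text{lin}}(a)^{\T}$.

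\textbf{AIPW weights.} Here $\hat h(a)=\mathbb{K}_{wx}\hat\eta_a$ lies in the span of $K_{wx}=\mathbb{K}_{wx}$, and $r_n=0$ by the first step. Proposition~\ref{prop:collapse} therefore gives $\hat\mu_{qh}(a)=\hat\mu_q(a)$ for every $Y$.

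To state this at the level of weights, we use $\hat\omega_{qh}(a)^{\T}=\hat\omega_q(a)^{\T}-\hat\omega_q(a)^{\T}\hat S(a)+\hat\omega_h(a)^{\T}$ from \S\ref{sec_ow}. Writing $M=\mathbb{K}_{zx}^{\T}D(a)\mathbb{K}_{wx}$, and using $D(a)^2=D(a)$,
\[
\hat\omega_q(a)^{\T}\hat S(a)=n^{-1}1_n^{\T}\mathbb{K}_{wx}M^{-1}\,\mathbb{K}_{zx}^{\T}D(a)\mathbb{K}_{wx}\,M^{-1}\mathbb{K}_{zx}^{\T}D(a)=n^{-1}1_n^{\T}\mathbb{K}_{wx}M^{-1}\mathbb{K}_{zx}^{\T}D(a)=\hat\omega_{\text{lin}}(a)^{\T}.
\]
So $\hat\omega_{qh}(a)^{\T}=\hat\omega_{\text{lin}}-\hat\omega_{\text{lin}}+\hat\omega_{\text{lin}}=\hat\omega_{\text{lin}}(a)^{\T}$.

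The key technical point is that both moment systems involve the same matrix $M$, up to transposition. This holds because $D(a)$ is symmetric and the roles of $\mathbb{K}_{zx}$ and $\mathbb{K}_{wx}$ swap between the $q$-equation and the $h$-equation. Consequently a single invertibility assumption makes both systems exactly solvable, and the dimension requirement $\dim W=\dim Z$ is precisely what makes $M$ square.
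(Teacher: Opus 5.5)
Your proof is correct and is essentially the paper's. You solve the $q$-moment and $h$-moment systems through the same invertible matrix $M=\mathbb{K}_{zx}^{\T}D(a)\mathbb{K}_{wx}$, as in Lemmas~\ref{lemma:linear_pipw} and~\ref{lemma:linear_gc}, and then show $\hat\omega_{\text{lin}}(a)^{\T}\hat S(a)=\hat\omega_{\text{lin}}(a)^{\T}$, so that $\hat\omega_{qh}(a)^{\T}=\hat\omega_q(a)^{\T}-\hat\omega_q(a)^{\T}\hat S(a)+\hat\omega_h(a)^{\T}$ collapses to $\hat\omega_{\text{lin}}(a)^{\T}$.

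Two side remarks: your appeal to Proposition~\ref{prop:collapse} with $r_n=0$ is a valid cross-check that the paper does not use. The identity $D(a)^2=D(a)$ you cite is never needed, because only a single $D(a)$ sits between $\mathbb{K}_{zx}^{\T}$ and $\mathbb{K}_{wx}$ in $\hat\omega_q(a)^{\T}\hat S(a)$.
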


This implies that the estimator $\hat\mu_{\text{lin}}(a)$ has a layer of robustness not previously noted:  even if $h(a) = \mathbb{K}_{wx}\eta_a$ is misspecified, $\hat\mu_{\text{lin}}(a)$ can still provide a consistent estimate of $E\{Y(a)\}$ so long as if $q(a)=\mathbb{K}_{zx}\phi_a$, and vice versa (given identification assumptions). Proof of Proposition \ref{prop:linear_pipw} is in \S\ref{s:linear_weights}. A caveat for Proposition \ref{prop:linear_pipw} is that arm-wise linearity conflicts with $\sum_{a\in\mathcal{A}}\pr(A=a\mid W,X)=1$; this is a relevant if contrasts such as $E\{Y(1)\}-E\{Y(0)\}$ are of primary interest.

\section{Discussion}
The link we establish between
the proximal causal inference and balancing weight literatures provides a new interpretation and means of connection across existing identification results. Utilizing outcome weighted forms of common estimators, several equivalence results (\S\ref{sec_dr_q_equiv}) clarify settings in which estimators based on one or both bridge functions can be expected to be similar, and in some cases numerically equivalent. 

\bibliography{references}      
\clearpage
\begin{appendix}

\renewcommand{\thesection}{S\arabic{section}}
\renewcommand{\theequation}{s\arabic{equation}}
\renewcommand{\theproposition}{S\arabic{proposition}}
\renewcommand{\theremark}{S\arabic{remark}}
\renewcommand{\theassumption}{S\arabic{assumption}}
\renewcommand{\thelemma}{S\arabic{lemma}}

\vspace{\baselineskip}
\LARGE \textbf{Supplementary material for `Proximal causal inference through cross-proxy balancing'}

\normalsize \noindent Equation and result numbers of the form (1), Proposition 1, Assumption 1
refer to the main text. Sections here are numbered S0, S1, S2, \dots and are cited from
the main text as ``Supplementary Material, \S S$n$''.

\setcounter{section}{-1}
\setcounter{equation}{0}
\setcounter{proposition}{0}
\setcounter{lemma}{0}
\setcounter{remark}{0}
\setcounter{assumption}{0}
\section{A conditioning identity used repeatedly}\label{s:cond}
The following identity is applied repeatedly (e.g., with $V=(W,X)$ and $V=(U,X)$), to move the treatment
indicator in and out of conditional expectations: for any integrable $f$, any $a\in\mathcal{A}$, and any conditioning variable $V$, where $\pr(A=a\mid V)>0$:
$E\{I(A=a)f\mid V\}=\pr(A=a\mid V)E(f\mid A=a,V).$

\section{{Equivalence of the bridge-function definitions and the balance conditions}}\label{s:rearrange}

Here we show that the definition of the treatment bridge function in \eqref{PCIq:eq} ($\pr(A=a\mid W,X)^{-1}=E\{q_{obs}(Z,a,X)\mid W,a,X\}$) is equivalent to the balance condition \eqref{PCIqw_bal} ($E\{k(W,X)\}=E\{I(A=a)\,q_{obs}(Z,a,X)\,k(W,X)\}$) for all $k(W,X)$ such that both expectations exist.

Equation~\eqref{PCIqw_bal}  is obtained by multiplying both sides of
\eqref{PCIq:eq} by $\pr(A=a\mid W,X)$ and using the identity in \S\ref{s:cond}:
\begin{align*}
    1&=\pr(A=a\mid W,X)E\{q_{obs}(Z,a,X)\mid W,a,X\}=E\{I(A=a)q_{obs}(Z,a,X)\mid W,X\}.
\end{align*}
Subtracting one from both sides gives $E\{I(A=a)q_{obs}(Z,a,X)-1\mid W,X\}=0$. Multiplying by any $k(W,X)$ with $E|I(A=a)q_{obs}k|+E|k|<\infty$ and taking expectations
over $(W,X)$,
\begin{align*}
    E\big[\{I(A=a)q_{obs}(Z,a,X)-1\}k(W,X)\big]
    &=E\big[k(W,X)\,E\{I(A=a)q_{obs}(Z,a,X)-1\mid W,X\}\big]\\
    &=0,
\end{align*}
which rearranges to \eqref{PCIqw_bal}. Conversely, if \eqref{PCIqw_bal} holds for
all such $k$ then the conditional expectation vanishes almost surely, giving
\eqref{PCIq:eq}. The same steps show the equivalence of \eqref{PCIqu} and
\eqref{PCIqu_bal}, with $q_{true}$ in place of $q_{obs}$ and $m(U,X)$ in place of
$k(W,X)$.

\section{Any $q_{true}$ is also a $q_{obs}$}\label{s:qtrueqobs}

As shown in \citet{kallus_causal_2022}, $\mathbb{Q}_{true} \subseteq \mathbb{Q}_{obs}$. For $a\in \mathcal{A}$ and any $q_{true}\in \mathbb{Q}_{true}$:
\begin{align*}
    E\{&q_{true}(Z,a,X)\mid W,a,X\}\\
    &=E[E\{q_{true}(Z,a,X)\mid U,W,a,X\}\mid W,a,X]\tag{iterated exp.}\\
    &=E[E\{q_{true}(Z,a,X)\mid U,a,X\}\mid W,a,X]\tag{by A.\ref{base}(e): $W\indep Z\mid U,A,X$ }\\
    &=E\{\pr(A=a\mid U,X)^{-1}\mid W,a,X\}\tag{def. of $q_{true}$}\\
    &=\int\pr(A=a\mid u,X)^{-1}\,dP_{U\mid W,A=a,X}(u).
\end{align*}
The conditional law of $U$ given $(W,A=a,X)$ satisfies the change of measure
$$dP_{U\mid W,A=a,X}=\frac{\pr(A=a\mid U,W,X)}{\pr(A=a\mid W,X)}\,dP_{U\mid W,X}
=\frac{\pr(A=a\mid U,X)}{\pr(A=a\mid W,X)}\,dP_{U\mid W,X},$$
the second equality by $W\indep A\mid U,X$ (A.\ref{base}(e)). Substituting,
\begin{align*}
    \int\pr(A=a\mid u,X)^{-1}\,dP_{U\mid W,A=a,X}(u)
    &=\int\frac{dP_{U\mid W,X}(u)}{\pr(A=a\mid W,X)}
    =\pr(A=a\mid W,X)^{-1},
\end{align*}
so any $q_{true}\in\mathbb{Q}_{true}$ solves the observed-data equation \eqref{PCIq:eq}; therefore $\mathbb{Q}_{true} \subseteq \mathbb{Q}_{obs}$. Note that dividing by $\pr(A=a\mid W,X)$ in the change of measure requires $\pr(A=a\mid W,X)>0$, which holds under A.\ref{base}(b) ($\pr(A=a\mid U,X)\geq \epsilon >0$) and A.\ref{base}(e) ($W\indep A \mid U,X$): 

\begin{align*}
    \pr(A=a\mid W,X) &= E\{\pr(A=a\mid U,W,X)\mid W,X\}\\
    &=E\{\pr(A=a\mid U,X)\mid W,X\}\tag{A.\ref{base}(e)}\\
    &\geq E\{\epsilon \mid W,X\} = \epsilon > 0.\tag{A.\ref{base}(b)}
\end{align*}

\section{Expressing $E\{Y(a)\}$ in terms of $q_{true}$ }\label{s:qtrueid}

Assume $\mathbb{Q}_{true}\not=\emptyset$, (A.\ref{PCIq}). For any $q_{true}\in \mathbb{Q}_{true}$:
\begin{align*}
E\{&I(A=a)q_{true}(Z,a,X)Y\}\\
&=E[E\{I(A=a)q_{true}(Z,a,X)Y\mid U,X\}]\tag{by iterated exp.}\\
&=E[\pr(A=a\mid U,X) E\{q_{true}(Z,a,X)Y\mid U,A=a,X\}]\tag{by \S\ref{s:cond}}\\
&=E[\pr(A=a\mid U,X) E\{q_{true}(Z,a,X)\mid U,a,X\}E\{Y\mid U,a,X\}] \tag{by A.\ref{base}(d): $Z \indep Y \mid U,A,X$}\\
&=E[E\{Y\mid U,a,X\}] \tag{by def. of $q_{true}$}\\
&=E[E\{Y(a)\mid U,a,X\}] \tag{by A.\ref{base}(a): $Y(A_i)=Y_i$}\\
&=E[E\{Y(a)\mid U,X\}] \tag{by A.\ref{base}(c): $Y(a)\indep A \mid U,X$}\\
&=E\{Y(a)\}. \tag{rev. iterated exp.}
\end{align*}
This shows that \eqref{eq:q_true_id} holds under A.\ref{base}(a)-(d). This is similarly shown in \citet{cui_semiparametric_2024} and \citet{kallus_causal_2022}. 

\section{The identification condition \eqref{q_id_condition} in a $U$-conditional form \eqref{bal_prop_id}}\label{s:ucond}

Assuming $\mathbb{Q}_{true}\not=\emptyset$ (A.\ref{PCIq}), which implies $\mathbb{Q}_{obs}\not=\emptyset$ by \S\ref{s:qtrueqobs},  consider any pair $(q_{obs},q_{true})\in (\mathbb{Q}_{obs}\times\mathbb{Q}_{true}):$
\begin{align*}
E[&I(A=a)\{q_{obs}(Z,a,X)-q_{true}(Z,a,X)\}Y]\tag{LHS of \ref{q_id_condition}}\\
&=E[E[I(A=a)\{q_{obs}(Z,a,X)-q_{true}(Z,a,X)\}Y\mid U,X]]\tag{by iterated exp.}\\
&=E[\pr(A=a\mid U,X) E[\{q_{obs}(Z,a,X)-q_{true}(Z,a,X)\}Y\mid U,A=a,X]]\tag{by \S\ref{s:cond}}\\
&=E[\pr(A=a\mid U,X) E\{q_{obs}(Z,a,X)-q_{true}(Z,a,X)\mid U,a,X\}E\{Y\mid U,a,X\}]. \tag{by A.\ref{base}(d): $Z \indep Y \mid U,A,X$}
\end{align*}

The final line is the LHS of \eqref{bal_prop_id}. Thus, \eqref{bal_prop_id} holding for all $(q_{obs},q_{true})\in (\mathbb{Q}_{obs}\times\mathbb{Q}_{true})$ is equivalent to \eqref{q_id_condition} holding for all pairs $(q_{obs},q_{true})$. 

\section{Completeness of $W$ for $U$ implies balance propagation}\label{s:comp_implies_balprop}
Consider any $q_{obs}$ and any $q_{true}$, and let $\delta^U_{q,ax}(U)=E\{q_{obs}(Z,a,x)-q_{true}(Z,a,x)\mid U,a,x\}$ be evaluated at fixed $(a,x)$. A.\ref{compUW}(b) assumes that $E\{q_{obs}(Z,a,x)\mid U,a,x\}\in \mathcal{L}^2_{a,x}(U)$ for all $q_{obs}$; this and A.\ref{base}(b) imply $\delta^U_{q,ax}(U)=E\{q_{obs}(Z,a,x)-q_{true}(Z,a,x)\mid U,a,x\}$ is square integrable (see \S \ref{s:q_true_Lsqr}).  Then: 
\begin{align*}
E\{q_{obs}(Z,a,x)-q_{true}(Z,a,x)\mid W,a,x\}&=0\tag{$q_{true}\in \mathbb{Q}_{obs}$}\\
   E[E\{q_{obs}(Z,a,x)-q_{true}(Z,a,x)\mid W,U,a,x\}\mid W,a,x]&=0\tag{iterated expectation}\\
      E[E\{q_{obs}(Z,a,x)-q_{true}(Z,a,x)\mid U,a,x\}\mid W,a,x]&=0\tag{by $W\indep Z \mid U,a,x$}\\
      E\{\delta^U_{q,ax}(U)\mid W,a,x\}&=0\tag{definition of $\delta^U_{q,ax}$}\\
     \delta^U_{q,ax}(U)&=0 \text{ a.s.}\tag{by completeness A.\ref{compUW}}
\end{align*}

\section{On square integrability of treatment bridge functions}\label{s:q_true_Lsqr}
Square integrability of $\delta^U_{q,ax}(U)=E\{q_{obs}(Z,a,x)-q_{true}(Z,a,x)\mid U,a,x\}$ (or equivalently, $\delta^U_{q,ax}(U)=E\{q_{obs}(Z,a,x)\mid U,a,x\}-\pr(A=a\mid U,x)^{-1}$), is implicitly used in the identification in \citet{cui_semiparametric_2024}; we make it explicit here. Here we show that A.\ref{compUW}(b) (which assumed that $E\{q_{obs}(Z,a,x)\mid U,a,x\}\in \mathcal{L}^2_{a,x}(U)$) and latent positivity (A.\ref{base}(b)) are sufficient for $\delta^U_{q,ax}(U)=E\{q_{obs}(Z,a,x)-q_{true}(Z,a,x)\mid U,a,x\}$ to be square integrable; thus that A.\ref{compUW}(a) can be applied to $\delta^U_{q,ax}(U)=E\{q_{obs}(Z,a,x)-q_{true}(Z,a,x)\mid U,a,x\}$. 

The second moment of  $\delta^U_{q,ax}(U)$ can be bounded by the following: 
\begin{align*}
    E[\delta^U_{q,ax}(U)^2\mid a,x]&=E[E\{q_{obs}(Z,a,x)-q_{true}(Z,a,x)\mid U,a,x\}^2\mid a,x ]\\
   & \leq 2E[E\{q_{true}(Z,a,x)\mid U,a,x\}^2\mid a,x ]+2E[E\{q_{obs}(Z,a,x)\mid U,a,x\}^2\mid a,x ].
\end{align*}

The boundedness of the first term, $E[E\{q_{true}(Z,a,x)\mid U,a,x\}^2\mid a,x] < \infty$ can be derived from latent positivity (A.\ref{base}(b): $\pr(A=a\mid U,X)\geq \epsilon>0$ almost surely):
\begin{align*}
    E\{q_{true}(Z,a,x)\mid U,a,x\}^2&=\pr(A=a\mid U,x)^{-2}
    \leq \epsilon^{-2} <\infty\text{ a.s}.
\end{align*}
Boundedness of the second term, $E[E\{q_{obs}(Z,a,x)\mid U,a,x\}^2\mid a,x] < \infty$ is assumed by A.\ref{compUW}(b). Thus, under the specified conditions, $E[\delta^U_{q,ax}(U)^2\mid a,x]<\infty$. 

Square integrability of $E\{q_{obs}(Z,a,x)\mid U,a,x\}$ can alternatively be shown under the assumption $q_{obs}(Z,a,x)\in \mathcal{L}^2_{ax}(Z)$, $\forall q_{obs}\in Q_{obs}$, rather than by assumption A.\ref{compUW}(b). Sufficient conditions for the existence of a $q_{obs}\in \mathcal{L}^2_{ax}(Z)$ are provided in \citet{cui_semiparametric_2024} (in their Supplementary Material, part B): $E[g(W)\mid Z,a,x]=0$ a.s. implies $g(W)=0$ a.s. for $a\in \mathcal{A}$ and all $x\in \mathcal{X}$ and regularity conditions (such that Picard's theorem, based on singular value decomposition, can be applied). Under these conditions $E\{q_{obs}(Z,a,x)\mid U,a,x\}$ is also square integrable; this can be shown using iterated expectation and Jensen's inequality, 
\begin{align*}
    \infty > E\{q_{obs}(Z,a,x)^2\mid a,x\}&=E[E\{q_{obs}(Z,a,x)^2\mid U,a,x\}\mid a,x]\\
    &\geq E[E\{q_{obs}(Z,a,x)\mid U,a,x\}^2\mid a,x]
\end{align*}
However, this does not imply that all $q_{obs}\in \mathbb{Q}_{obs};$  are in $\mathcal{L}^2_{ax}(Z)$; hence why we specify A.\ref{compUW}(b) for this identification path in the main text rather than relying on these conditions. 

In the alternate path of \citet{cui_semiparametric_2024} (their Remark 5), the completeness condition A.\ref{compZW}(a) is used to imply the uniqueness of $q_{obs}$. Given any two $q_{obs}^1$ and $q_{obs}^2$ in $\mathbb{Q}_{obs},$ they apply it to $E\{q_{obs}^1(Z,a,x)-q_{obs}^2(Z,a,x)\mid W,a,x\}=0$ a.s. to show that $q_{obs}^1(Z,a,x)-q_{obs}^2(Z,a,x)=0$ a.s. This implicitly requires that $q_{obs}^1(Z,a,x)-q_{obs}^2(Z,a,x)$ is square integrable for any two $q_{obs}^1$ and $q_{obs}^2$ in $\mathbb{Q}_{obs},$ thus why we specify $q_{obs}\in\mathcal{L}_{ax}^2(Z)$ for all $q_{obs}\in \mathbb{Q}_{obs}$ in A.\ref{compZW}(b).


In contrast, the identification conditions of \citet{kallus_causal_2022} do not use completeness conditions in identification; thus, such square integrability conditions are not required. However, they do use square-integrability conditions to show properties of their subsequent estimation approaches.

 
\section{The identification condition \eqref{q_id_condition} in a $Z$-conditional form \eqref{bal_prop_idZ}}\label{s:id_z_condtion}
Assuming $\mathbb{Q}_{true}\not=\emptyset$, (A.\ref{PCIq}), consider any pair $(q_{obs},q_{true})\in (\mathbb{Q}_{obs}\times\mathbb{Q}_{true}):$
\begin{align*}
E[&I(A=a)\{q_{obs}(Z,a,X)-q_{true}(Z,a,X)\}Y]\tag{LHS of \ref{q_id_condition}}\\
&=E[E[I(A=a)\{q_{obs}(Z,a,X)-q_{true}(Z,a,X)\}Y\mid Z,A,X]]\tag{by iterated exp.}\\
&=E[I(A=a)\{q_{obs}(Z,a,X)-q_{true}(Z,a,X)\}E[Y\mid Z,A,X]]\tag{pulling functions of $Z,A,X$ out}\\
&=E[I(A=a)\{q_{obs}(Z,a,X)-q_{true}(Z,a,X)\}E[Y\mid Z,A=a,X]].\tag{because $I(A=a)E[Y\mid Z,A,X]=I(A=a)E[Y\mid Z,A=a,X]$}\\
\end{align*}
The fourth line is the LHS of \eqref{bal_prop_idZ}; therefore \eqref{q_id_condition} and \eqref{bal_prop_idZ} are the same condition.

\section{Identification via $q_{obs}$ when $h_{obs}$ is assumed to exist}\label{s:kmu}

Assuming an $h_{obs}$ exists, and letting $\delta_q$ abbreviate $\{q_{obs}(Z,a,X)-q_{true}(Z,a,X)\}$:
\begin{align}
E\big\{&I(A=a)\delta_q E(Y\mid Z,a,X)\big\}\tag{LHS of \ref{bal_prop_idZ}}\\
&=E\big\{I(A=a)\delta_q E(h_{obs}(W,a,X)\mid Z,a,X)\big\}\tag{def of $h_{obs}$}\\
&=E\big\{I(A=a)\delta_q E(h_{obs}(W,a,X)\mid Z,A,X)\big\}\tag{*}\\
&=E\{I(A=a)\delta_q h_{obs}(W,a,X)\big\}\tag{rev. iterated exp.}
\end{align}
The step marked $(*)$ comes from  $I(A=a)E[h_{obs}\mid Z,a,X]=I(A=a)E[h_{obs}\mid Z,A,X]$. Thus, the LHS of \eqref{bal_prop_idZ} is equal to $E[I(A=a)\{q_{obs}(Z,a,X)-q_{true}(Z,a,X)\}h_{obs}(W,a,X)].$

In the main text we explained that if $h_{obs}(W,a,X)$ exists then $q_{obs}$ balances
it by definition, and so does $q_{true}$, because $q_{true}\in \mathbb{Q}_{obs}$. That is, taking $k(W,X)=h_{obs}(W,a,X)$ in \eqref{PCIqw_bal}:
\begin{align*}
E\{h_{obs}(W,a,X)\}&=E\{I(A=a)q_{obs}(Z,a,X)h_{obs}(W,a,X)\}\\
&=E\{I(A=a)q_{true}(Z,a,X)h_{obs}(W,a,X)\},
\end{align*}
so that $E[I(A=a)\{q_{obs}(Z,a,X)-q_{true}(Z,a,X)\}h_{obs}(W,a,X)]=0$.  This step is
what the balancing-weight reading makes transparent; \citet{kallus_causal_2022} obtain it without that motivation.

\section{On identification via outcome bridge function}\label{s:hmirror}

Let $\mathbb{H}_{obs}$ denote the set of ``observed-data'' outcome bridge functions $h_{obs}(W,a,X)$ solving \eqref{PCIh:eq}, and let $\mathbb{H}_{true}$ denote the set of ``true'' outcome bridge functions $h_{true}$ solving \eqref{PCIhu:eq}:
\begin{align}
E(Y\mid Z,a,X)=E\{h_{obs}(W,a,X)\mid Z,a,X\} \quad (a\in\mathcal{A}).\tag{\ref{PCIh:eq}}\\
E(Y\mid U,a,X)=E\{h_{true}(W,a,X)\mid U,a,X\} \quad (a\in\mathcal{A}).\label{PCIhu:eq}
\end{align}

\begin{assumption}\label{PCIhtrue}
    There is a solution to \eqref{PCIhu:eq} (i.e., $\mathbb{H}_{true}\not=\emptyset$).
\end{assumption}
\noindent As in the treatment-bridge case, $\mathbb{H}_{true}\subseteq\mathbb{H}_{obs}$. This can be shown similarly to \S\ref{s:qtrueqobs}:  A.\ref{base}(d) can be used to write $E(Y\mid Z,a,X)=E\{E(Y\mid U,a,X)\mid Z,a,X\}$ and then $W\indep Z\mid U,A,X$ (from A.\ref{base}(e)) to write $E\{E(h\mid U,a,X)\mid Z,a,X\}=E(h\mid Z,a,X)$. Under A.\ref{PCIhtrue}, any $h_{true}$ satisfies $E\{Y(a)\}=E\{h_{true}(W,a,X)\}$; this step uses the $W\indep A\mid U,X$ component of A.\ref{base}(e), via $E\{h_{true}(W,a,X)\mid U,a,X\}=E\{h_{true}(W,a,X)\mid U,X\}$. 
Identification via an observed-data bridge function then reduces to the condition eq. \ref{h_id_condition} and the equivalent forms: 
\begin{align}
    E[h_{obs}(W,a,X)-h_{true}(W,a,X)]&=0\label{h_id_condition}\\
    E\big[E[h_{obs}(W,a,X)-h_{true}(W,a,X)\mid U,a,X]\big]&=0\label{bp_h}\\
    E[I(A=a)\,q_{obs}(Z,a,X)\,\{h_{obs}(W,a,X)-h_{true}(W,a,X)\}]&=0.\label{kmu_h}
\end{align}

The following condition, A.\ref{a_ecp}, plays the role for the outcome bridge that balance propagation (A.\ref{a_balprop}) played for the treatment bridge. 

\begin{assumption}[Error-correction propagation]\label{a_ecp}
For all $a\in\mathcal{A}$ and $P_X$-almost every $x$, $E\{h_{obs}(W,a,x)-h_{true}(W,a,x)\mid U,A=a,X=x\}=0$ almost surely, for every $h_{obs}\in\mathbb{H}_{obs}$ and $h_{true}\in\mathbb{H}_{true}$.
\end{assumption}

\citet{miao_identifying_2018,miao_confounding_2024} provide sufficient completeness conditions for A.\ref{a_ecp}. 
The original result of \citet{miao_identifying_2018} uses completeness of $Z$ for $U$ given $A,X$ (i.e. $E[g(U)\mid Z,a,x]=0\implies g(U)=0$ a.s. for all $a,x$), which implies $\mathbb{H}_{obs}=\mathbb{H}_{true}$ (using parallel logic to \S\ref{s:comp_implies_balprop}) and hence eq.~\ref{bp_h}. Alternatively, error-correction propagation holds trivially when the solution is unique, $\mathbb{H}_{obs}=\mathbb{H}_{true}=\{h_0\}$, so that $h_{obs}-h_{true}=0$; \citet{miao_confounding_2024} give a sufficient condition for uniqueness, namely completeness of $Z$ for $W$ given $A,X$ (i.e. $E[g(W)\mid Z,a,x]=0\implies g(W)=0$ a.s. for all $a,x$). This mirrors A.\ref{compZW}, as used in \citet{cui_semiparametric_2024}. Square integrability conditions (often implicit, as discussed in \S\ref{s:q_true_Lsqr}), will also be needed for these both of these identification paths, in order to invoke completeness conditions. 

Finally, \citet{kallus_causal_2022} (Theorem 1.1) show that when an observed treatment bridge function $q_{obs}$ exists, eq.~\ref{h_id_condition} is equivalent to eq.~\ref{kmu_h}: taking $k(W,X)=h(W,a,X)$ in the balancing condition eq.~\ref{PCIqw_bal} gives $E[h(W,a,X)]=E[I(A=a)\,q_{obs}(Z,a,X)\,h(W,a,X)]$ for any $h$, so from this perspective, the outcome-bridge estimand inherits identification directly from the treatment bridge's balancing property. 

\section{Estimators that assume a linear smoother form of outcome bridge function}\label{s:lin_h}
\subsection{Example 1: Linear outcome bridge function across treatment arms}
While we focus on the linear within-arm $a$ assumption on $h$ in the main text for parallel with the treatment bridge function defined within arms, a more common parametric assumption is that $h(W,A,X)$ is linear \textit{across arms}, as in \citet{tchetgen_tchetgen_introduction_2024,zivich_introducing_2023,cui_semiparametric_2024,liu_regression-based_2024}: 
\begin{equation}\label{h_linear}
    h(W_i, A_i, X_i; \eta) = \eta_0 + \eta_w ^\T W _i+ \eta_a A_i + \eta_x ^\T X_i.
\end{equation}
Combining the moment conditions \eqref{eq:h-moments} across arms, we have:
\begin{equation}\label{eq:h-moments_lin}
    \sum_{i=1}^n[h(W_i,A_i,X_i;\eta)-Y_i]m(Z_i,A_i,X_i)=0_{p}
\end{equation}
where $m(Z_i,A_i,X_i)$ is a $p$-dimensional vector (e.g., $(1,Z_i^\T,A_i,X_i^\T)^\T$). If $p$ is the same as the number of parameters in $\eta$ (and an invertibility condition is satisfied), this can be solved exactly, to derive the implied outcome weights in a similar fashion to those in the per-arm-linear model described in \S\ref{sec_dr_q_equiv} and \S\ref{s:linear_weights}.

\begin{proposition}\label{prop:pgf}
 Let matrix $M$ have rows $m_i^\T=(1,Z_i^\T ,A_i,X_i^\T)$, matrix $R$ have rows $r_i^\T=(1,W_i^\T,A_i,X_i^\T)$ and matrix  $R(a)$ have rows $r_i(a)^\T=(1,W_i^\T,a,X_i^\T).$ Assume a linear form for the outcome bridge $h(W_i, A_i, X_i; \eta) =r_i^\T \eta$, fit using \eqref{eq:h-moments_lin} with $m_i^\T=(1,Z_i^\T,A_i,X_i^\T)$. If $M^\T R$ is invertible, the estimated parameters are: $\hat\eta = (M^\T R)^{-1}M^\T Y$, the fitted values of $h$ for a specific level $a$ are $\hat{h}{(a)} = R{(a)}\hat\eta=S{(a)}Y$ where $S(a) = R{(a)}(M^\T R)^{-1}M^\T $, and the proximal g-computation estimator has an outcome weighted form ($\hat\mu_h(a)=\omega(a)^\T Y$) with implied weights  $\omega(a)^\T  = n^{-1}1_n^\T R{(a)}(M^\T R)^{-1}M^\T =n^{-1}1_n^\T S{(a)}.$
\end{proposition}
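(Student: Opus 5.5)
The plan is to write the stacked moment conditions \eqref{eq:h-moments_lin} in matrix form, solve them for $\hat\eta$, and then read off both the smoother matrix and the implied weights by substituting into $\hat\mu_h(a)=n^{-1}\sum_i \hat h(W_i,a,X_i)$.

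\emph{Step 1 (matrix form).} With $h(W_i,A_i,X_i;\eta)=r_i^\T\eta$ and $m(Z_i,A_i,X_i)=m_i$, the left side of \eqref{eq:h-moments_lin} is $\sum_i m_i(r_i^\T\eta - Y_i)$. This equals $M^\T(R\eta - Y)$, since $M^\T R=\sum_i m_i r_i^\T$ and $M^\T Y=\sum_i m_i Y_i$. So the moment equations are $M^\T R\eta = M^\T Y$.

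\emph{Step 2 (solve).} The number of moments equals the number of parameters, so $M^\T R$ is square, with dimension $1+\dim(Z)+1+\dim(X)$. This requires $\dim(Z)=\dim(W)$, which is implicit in the invertibility hypothesis. Since $M^\T R$ is assumed invertible, the system has the unique solution $\hat\eta=(M^\T R)^{-1}M^\T Y$.

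\emph{Step 3 (fitted values and weights).} Evaluating the fitted bridge at $A=a$ for every unit gives the vector $\hat h(a)$ with entries $r_i(a)^\T\hat\eta$. Stacking these gives
\[
\hat h(a)=R(a)\hat\eta=R(a)(M^\T R)^{-1}M^\T Y=S(a)Y.
\]
This is exactly the linear smoother form required in \S\ref{sec_ow}, with $S(a)$ depending only on $(W,Z,A,X)$ and not on $Y$. The proximal g-computation estimator is the average of these fitted values:
\[
\hat\mu_h(a)=n^{-1}1_n^\T\hat h(a)=n^{-1}1_n^\T S(a)Y.
\]
This matches \eqref{mu_h} and yields $\omega(a)^\T=n^{-1}1_n^\T R(a)(M^\T R)^{-1}M^\T$.

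The one point needing care is the distinction between $R$ and $R(a)$. The estimation in Step 1 uses the observed $A_i$, through both $R$ and $M$, whereas prediction sets every unit's treatment to $a$. The weights therefore involve $R(a)$ only in the leading factor. This also explains why, unlike Proposition \ref{prop:linear_pipw}, no $D(a)$ appears and units in every arm receive nonzero weight. Beyond this, the argument is routine linear algebra.
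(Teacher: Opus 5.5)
Your proof is correct and follows the paper's argument essentially step for step. You write \eqref{eq:h-moments_lin} as $M^\T(R\eta-Y)=0$, invert $M^\T R$, evaluate at $R(a)$ to obtain $S(a)$, and average to get $\omega(a)^\T=n^{-1}1_n^\T S(a)$. Writing $R$ (observed $A_i$) in the moment equation is slightly cleaner than the paper's shorthand $M^\T(h(a)-Y)=0$, and your side remarks that invertibility forces $\dim(Z)=\dim(W)$ and that no $D(a)$ appears are accurate.
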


\begin{proof}
    In matrix form, equation \eqref{eq:h-moments_lin} can be written as $M^\T (h(a)-Y)=0$. Using the linear form of $h(a)$, this is $M^\T (R\,\eta-Y)=0$. Rearranging, $M^\T R\,\eta=M^\T Y$. Under the invertibility assumption, $\hat\eta=(M^\T R)^{-1}M^\T Y$. The fitted values of the outcome bridge for treatment level $a$ are $\hat h(a)=R(a)\hat\eta=R(a)(M^\T R)^{-1}M^\T Y$ or, defining $S(a)=R(a)(M^\T R)^{-1}M^\T $, $\hat h(a)=S(a)Y$. The proximal g-computation estimator is $\hat\mu_h(a)=n^{-1}1_n^\T\hat h(a)=n^{-1}1_n^\T S{(a)}Y$, an outcome weighted form with $\omega(a)^\T =n^{-1}1_n^\T S{(a)}.$
\end{proof}

The estimator in Proposition \ref{prop:pgf}  also satisfies the conditions of Proposition 1 in the main text, when $\hat q(a)$ is estimated to balance $\mathbb{K}_{wx}$ with rows $(1,W_i^\T,X_i^\T)$. Here, $\hat h(a)$ has the form $\hat h(a) =R(a)\hat\eta$ and $R(a)$ (with rows $(1,W_i^\T,a, X_i^\T)$) is in the span of $\mathbb{K}_{wx}$, as $a$ is a constant. 

\subsection{Example 2: Minimax kernel estimators of \citet{ghassami_minimax_2021}}
 
\citet{ghassami_minimax_2021} propose a nonparametric estimation approach. They begin from a general framework for minimax estimators derived from efficient influence functions for a class of estimands, and then specialize to estimation of both bridge functions in the proximal framework, represented in reproducing kernel Hilbert space function classes. Following their Propositions 3 and 4, fix the arm $a$ and represent each bridge function in a reproducing kernel Hilbert space of functions on $(W,X)$ and on $(Z,X)$, respectively; for simplicity, let the kernels be the same across arms. Let $K_H$ and $K_Q$ be the corresponding kernel Gram matrices, so that:
\begin{align*}
h(w, a,x;\eta,\kappa_h) &= \sum_{j=1}^n \eta_{a,j} \, \kappa_h((w, x), (W_j, X_j)),\\
q(z, a,x;\phi,\kappa_q) &= \sum_{j=1}^n \phi_{a,j} \, \kappa_q((z, x),\; (Z_j, X_j)).
\end{align*}

For example, one could use a Gaussian kernel. Each bridge function is estimated in a minimax framework: $q(Z,a,X)$ is chosen from its class to minimize the worst-case imbalance over functions $h(W,a,X)$ in the other class, and $h(W,a,X)$ likewise against the worst case over $q(Z,a,X)$. This is similar to the estimators of \citet{kallus_causal_2022}, though they more generally draw the adversary from another class of functions, a critic class, and they also accommodate function classes beyond reproducing kernel Hilbert spaces, including linear sieves and neural networks. In both cases each bridge function is fitted against the worst case of the other, and that is what controls the product bias term of the doubly robust functional.

For reproducing kernel Hilbert space classes, \citet{ghassami_minimax_2021} show that for a fixed set of hyperparameters, estimation of the bridge functions yield closed form solutions \citep[Proposition 4 in][]{ghassami_minimax_2021}. This provides a closed form of the implied weights. In summary, for the outcome bridge function: 
\begin{enumerate}
    \item The estimated parameters have the form $\hat\eta_a=(K_{H}D(a)\Gamma_{h,a}D(a)K_{H}+n^{2}\lambda_{h,a} K_{H})^{\dagger}K_{H}D(a)\Gamma_{h,a}D(a)Y$, where $\Gamma_{h,a}=\tfrac14 K_{Q}(n^{-1}K_{Q}+\nu_{h,a}I_n)^{-1}$; here $\lambda_{h,a}$ is the ridge penalty on the bridge function and $\nu_{h,a}$ regularizes the adversarial critic. Here $\Gamma_{h,a}$ is built from $K_Q$: the adversary for $h$ lives on $(Z,X)$-space. Let $L(a)=(K_{H}D(a)\Gamma_{h,a}D(a)K_{H}+n^{2}\lambda_{h,a} K_{H})^{\dagger}K_{H}D(a)\Gamma_{h,a}D(a)$; then $\hat\eta_a$ can be written as $\hat\eta_a=L(a)Y$.
    \item Fitted values for the outcome bridge function are then: $\hat h(a)=K_{H}\hat{\eta}_a=K_{H}L(a)Y=S(a)Y$, with $S(a)=K_HL(a)$.
    \item The estimator $\hat\mu_h(a)=n^{-1}1_n^{\T}K_{H}\hat\eta_a$ then has the weighting representation $\hat\mu_h(a)=\hat\omega_{h,\mathrm{mm}}(a)^{\T}Y$ with
$\hat\omega_{h,\mathrm{mm}}(a)^{\T}=n^{-1}1_n^{\T}S(a)$.
\end{enumerate}

Similarly, the treatment bridge function has the closed form, with: 
\begin{enumerate}
    \item The estimated parameters $\hat\phi_a=(K_{Q}D(a)\Gamma_{q,a}D(a)K_{Q}+n^{2}\lambda_{q,a} K_{Q})^{\dagger}K_{Q}D(a)\Gamma_{q,a}1_n$, where $\Gamma_{q,a}=\tfrac14 K_{H}(n^{-1}K_{H}+\nu_{q,a}I_n)^{-1}$. The penalties $\lambda_{q,a}$ and $\nu_{q,a}$ play the roles that $\lambda_{h,a}$ and $\nu_{h,a}$ played above, and $\Gamma_{q,a}$ is built from $K_H$: the adversary for $q$ lives on $(W,X)$-space.
    \item Fitted values $\hat q(a)=K_{Q}\hat{\phi}_a$,
    \item and outcome weights $\hat\omega_{q,\mathrm{mm}}(a)^{\T}=n^{-1}\hat\phi_a^{\T}K_{Q}D(a).$
\end{enumerate} Combining these via the expression for the proximal AIPW estimator in \S\ref{sec_ow} gives the outcome weights of the minimax doubly robust estimator.

In practice, \citet{ghassami_minimax_2021} propose estimating the bridge functions by cross-fitting \citep{chernozhukov_doubledebiased_2018}. This procedure can still be written in an outcome weighted form; we discuss the outcome bridge first. Consider $j=1,...,J$ folds with size $n_j$. Let $\Pi_j$ be an $n$ by $n$ matrix where a 1 on the $i$th diagonal indicates unit $i$ is in fold $j$ and $D_{-j}(a)=D(a)(I_n- \Pi_j)$---a combination of the matrix that indicates units are in arm $a$, multiplied by a matrix that selects only units not in fold $j$. The estimated $\hat\eta_{a,-j}$  obtained with all units except those in fold $j$ is then $\hat\eta_{a,-j}=L_{-j}(a)Y$, where $L_{-j}(a)=(K_{H}D_{-j}(a)\Gamma_{h,a,-j}D_{-j}(a)K_{H}+(n-n_j)^{2}\lambda_{h,a} K_{H})^{\dagger}K_{H}D_{-j}(a)\Gamma_{h,a,-j}D_{-j}(a)$ and $\Gamma_{h,a,-j}=\tfrac14 K_{Q}((n-n_j)^{-1}K_{Q}+\nu_{h,a}I_{n})^{-1}$. Then:

$$\hat h(a)=\sum_{j=1}^J\Pi_jK_{H}L_{-j}(a)Y, \quad S(a)=\sum_{j=1}^J\Pi_jK_{H}L_{-j}(a), \quad \hat\omega_h(a)=n^{-1}1_n^\T S(a).$$ 

With a similar process on the treatment bridge function (i.e., using $D_{-j}(a)$ and $n-n_j$ for scaling terms instead of $n$), we can obtain $\hat\phi_{a,-j}=(K_{Q}D_{-j}(a)\Gamma_{q,a,-j}D_{-j}(a)K_{Q}+(n-n_j)^{2}\lambda_{q,a} K_{Q})^{\dagger}K_{Q}D_{-j}(a)\Gamma_{q,a,-j}(I_n-\Pi_j)1_n;$ the fitted values $\hat q(a)=\sum_{j=1}^J\Pi_jK_{Q}\hat\phi_{a-j}$; and thus the outcome weights are $\hat\omega_{q,\mathrm{mm}}(a)^{\T}=n^{-1}\hat q(a)^\T D(a).$

\section{Equivalence of estimators when bridge functions are linear in both arms}\label{s:linear_weights}

Proposition \ref{prop:linear_pipw} is proved through the following lemmas. 

\begin{lemma}\label{lemma:linear_pipw}
Let $\mathbb{K}_{zx}$ be the matrix with rows $(1,Z_i^\T,X_i^\T)$ and $\mathbb{K}_{wx}$ the matrix with rows $(1,W_i^\T,X_i^\T)$. Suppose  $q(a)=\mathbb{K}_{zx}\phi_a$ and that $\mathbb{K}_{zx}^{\T}D(a)\mathbb{K}_{wx}$ is invertible. Then the proximal IPW estimator solving \eqref{eq:q-moments} with $k_q^p(W_i,X_i)=(1,W_i^\T,X_i^\T)^{\T}$ is $\hat\mu_q(a)=\hat\omega_{q\text{-lin}}(a)^{\T}Y$ with
$$\hat\omega_{q\text{-lin}}(a)^{\T}=n^{-1}1_n^{\T}\mathbb{K}_{wx}(\mathbb{K}_{zx}^{\T}D(a)\mathbb{K}_{wx})^{-1}\mathbb{K}_{zx}^{\T}D(a).$$
\end{lemma}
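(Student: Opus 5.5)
The plan is to write the moment condition \eqref{eq:q-moments} in matrix form, solve it exactly for $\phi_a$ using the invertibility assumption, and substitute the result into the outcome-weighted form $\hat\omega_q(a)^{\T}=n^{-1}\hat q(a)^{\T}D(a)$ from \S\ref{sec_pipw_est}.

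With $k_q^p(W_i,X_i)=(1,W_i^\T,X_i^\T)^\T$, the balancing matrix is $K_{wx}=\mathbb{K}_{wx}$. The empirical balance condition \eqref{eq:q-moments}, stacked over its $p$ components, becomes the row-vector equation
\[
n^{-1}q(a)^{\T}D(a)\mathbb{K}_{wx}=n^{-1}1_n^{\T}\mathbb{K}_{wx}.
\]
Substituting the linear form $q(a)=\mathbb{K}_{zx}\phi_a$ gives $\phi_a^{\T}\mathbb{K}_{zx}^{\T}D(a)\mathbb{K}_{wx}=1_n^{\T}\mathbb{K}_{wx}$. This is a square linear system in $\phi_a$: the invertibility assumption forces the column dimensions of $\mathbb{K}_{zx}$ and $\mathbb{K}_{wx}$ to agree, so the number of parameters equals the number of moments. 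Right-multiplying by $(\mathbb{K}_{zx}^{\T}D(a)\mathbb{K}_{wx})^{-1}$ yields the unique solution
\[
\hat\phi_a^{\T}=1_n^{\T}\mathbb{K}_{wx}(\mathbb{K}_{zx}^{\T}D(a)\mathbb{K}_{wx})^{-1}.
\]
This solution achieves $r_n=0$ exactly.

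The fitted vector is then $\hat q(a)^{\T}=\hat\phi_a^{\T}\mathbb{K}_{zx}^{\T}$. Plugging it into $\hat\omega_q(a)^{\T}=n^{-1}\hat q(a)^{\T}D(a)$ gives
\[
\hat\omega_{q\text{-lin}}(a)^{\T}=n^{-1}1_n^{\T}\mathbb{K}_{wx}(\mathbb{K}_{zx}^{\T}D(a)\mathbb{K}_{wx})^{-1}\mathbb{K}_{zx}^{\T}D(a),
\]
and $\hat\mu_q(a)=\hat\omega_{q\text{-lin}}(a)^{\T}Y$ follows directly from \eqref{mu_q}.

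No step is genuinely hard. The one point that needs care is orientation bookkeeping. The balance equation is a row-vector equation, so transposing correctly requires $(\mathbb{K}_{zx}^{\T}D(a)\mathbb{K}_{wx})^{-1}$, not the inverse of its transpose, acting on the right. It also uses the fact that $D(a)$ is symmetric and idempotent, so writing $D(a)$ once or twice does not matter. I would also note explicitly that exactness of the solution is what lets the result be stated as an equality rather than up to an imbalance term.
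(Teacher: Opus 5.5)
Your proposal is correct and follows the paper's proof essentially step for step. Both write \eqref{eq:q-moments} as $q(a)^{\T}D(a)\mathbb{K}_{wx}=1_n^{\T}\mathbb{K}_{wx}$, substitute $q(a)=\mathbb{K}_{zx}\phi_a$, right-multiply by $(\mathbb{K}_{zx}^{\T}D(a)\mathbb{K}_{wx})^{-1}$, and plug $\hat q(a)^{\T}=\hat\phi_a^{\T}\mathbb{K}_{zx}^{\T}$ into $n^{-1}\hat q(a)^{\T}D(a)Y$; your remarks on orientation, the dimension requirement, and exactness ($r_n=0$) are accurate details the paper leaves implicit.
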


\begin{proof}[Proof of Lemma \ref{lemma:linear_pipw}]
Under this set up, \eqref{eq:q-moments} is:
$q(a)^\T D(a) \mathbb{K}_{wx}=1_n^\T \mathbb{K}_{wx}$. Plugging in  $q(a)=\mathbb{K}_{zx}\phi_a$, this is $\phi_a^\T \mathbb{K}_{zx}^\T D(a) \mathbb{K}_{wx}=1_n^\T \mathbb{K}_{wx}$. Under the assumption that $\mathbb{K}_{zx}^{\T}D(a)\mathbb{K}_{wx}$ is invertible, this is $\phi_a^\T=1_n^\T \mathbb{K}_{wx} (\mathbb{K}_{zx}^\T D(a) \mathbb{K}_{wx})^{-1}$. Thus, $\hat q(a)^{\T}=\phi_a^\T \mathbb{K}_{zx}^\T=1_n^\T \mathbb{K}_{wx} (\mathbb{K}_{zx}^\T D(a) \mathbb{K}_{wx})^{-1}\mathbb{K}_{zx}^\T$ and $\hat\mu_q(a)=n^{-1}\hat q(a)^{\T}D(a)Y=\hat\omega_{q\text{-lin}}(a)^{\T}Y$ with $\hat\omega_{q\text{-lin}}(a)^{\T}=n^{-1}1_n^{\T}\mathbb{K}_{wx}(\mathbb{K}_{zx}^{\T}D(a)\mathbb{K}_{wx})^{-1}\mathbb{K}_{zx}^{\T}D(a)$.
\end{proof}

\begin{lemma}\label{lemma:linear_gc}
    As in Lemma \ref{lemma:linear_pipw}, let $\mathbb{K}_{zx}$ be the matrix with rows $(1,Z_i^\T,X_i^\T)$ and $\mathbb{K}_{wx}$ the matrix with rows $(1,W_i^\T,X_i^\T)$. Suppose  $h(a) = \mathbb{K}_{wx}\eta_a$, \eqref{eq:h-moments} is solved with $k_h^p(Z_i,X_i)^{\T}=(1,Z_i^{\T},X_i^{\T})$, and $\mathbb{K}_{zx}^{\T}D(a)\mathbb{K}_{wx}$ is invertible. The corresponding proximal-g-computation estimator can be written in the outcome weighted form $\hat\mu_{lin}(a)=\hat\omega_{lin}(a)^{\T} Y$ with the implied weights 
$\hat\omega_{lin}(a)^{\T}=n^{-1}1_n^{\T}\mathbb{K}_{wx}(\mathbb{K}_{zx}^{\T}D(a)\mathbb{K}_{wx})^{-1}\mathbb{K}_{zx}^{\T}D(a).$

\end{lemma}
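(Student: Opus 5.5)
The plan is to mirror the proof of Lemma \ref{lemma:linear_pipw}, this time starting from the outcome moment condition \eqref{eq:h-moments} instead of \eqref{eq:q-moments}. With $k_h^p(Z_i,X_i)^{\T}=(1,Z_i^{\T},X_i^{\T})$, the matrix $K_{zx}$ in \eqref{eq:h-moments} is $\mathbb{K}_{zx}$. The condition then reads $\{Y-h(a)\}^{\T}D(a)\mathbb{K}_{zx}=0$. Transposing gives $\mathbb{K}_{zx}^{\T}D(a)\{Y-h(a)\}=0$. Substituting the linear form $h(a)=\mathbb{K}_{wx}\eta_a$ gives the square linear system
\[
\mathbb{K}_{zx}^{\T}D(a)\mathbb{K}_{wx}\,\eta_a=\mathbb{K}_{zx}^{\T}D(a)Y.
\]
Because $D(a)$ is diagonal and idempotent, it appears once on each side, so only units in arm $a$ enter the fit.

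The invertibility assumption gives the unique solution
\[
\hat\eta_a=(\mathbb{K}_{zx}^{\T}D(a)\mathbb{K}_{wx})^{-1}\mathbb{K}_{zx}^{\T}D(a)Y.
\]
The fitted bridge values over all $n$ units are then $\hat h(a)=\mathbb{K}_{wx}\hat\eta_a=\hat S(a)Y$, with smoother
\[
\hat S(a)=\mathbb{K}_{wx}(\mathbb{K}_{zx}^{\T}D(a)\mathbb{K}_{wx})^{-1}\mathbb{K}_{zx}^{\T}D(a).
\]
This puts the estimator in the linear-smoother form assumed in \S\ref{sec_ow}. By \eqref{mu_h}, $\hat\mu_h(a)=n^{-1}1_n^{\T}\hat S(a)Y$, and reading off the row vector multiplying $Y$ gives exactly $\hat\omega_{\text{lin}}(a)^{\T}$ as stated. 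This coincides with $\hat\omega_{q\text{-lin}}(a)^{\T}$ from Lemma \ref{lemma:linear_pipw}.

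The argument is a direct computation, so I do not expect a real obstacle. The one step needing care is bookkeeping: $\hat h(a)$ must be evaluated at every unit, through the full $\mathbb{K}_{wx}$, not only at units in arm $a$. This is why the leading factor is $1_n^{\T}\mathbb{K}_{wx}$ while the trailing $D(a)$ restricts the outcomes actually used. I would also state explicitly that the same invertibility condition, $\mathbb{K}_{zx}^{\T}D(a)\mathbb{K}_{wx}$ invertible, serves both lemmas. That lets the proof of Proposition \ref{prop:linear_pipw} cite the two lemmas together.

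Proposition \ref{prop:linear_pipw} then needs the AIPW estimator to share these weights. For that step I would invoke Proposition \ref{prop:collapse}: $\hat h(a)=\mathbb{K}_{wx}\hat\eta_a$ lies in the span of $\mathbb{K}_{wx}$, and \eqref{eq:q-moments} is solved exactly, so $r_n=0$ and $\hat\mu_{qh}(a)=\hat\mu_q(a)$.
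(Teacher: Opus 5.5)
Your proposal is correct and matches the paper's proof essentially step for step. Both transpose \eqref{eq:h-moments} to $\mathbb{K}_{zx}^{\T}D(a)\{h(a)-Y\}=0$, substitute $h(a)=\mathbb{K}_{wx}\eta_a$, invert to get $\hat\eta_a$, form the smoother $\hat S(a)$, and read off $n^{-1}1_n^{\T}\hat S(a)$ as the weights. Your closing aside on Proposition \ref{prop:linear_pipw} is also valid: invoking Proposition \ref{prop:collapse} with $r_n=0$ is a shortcut, where the paper instead checks $\hat\omega_{\text{lin}}(a)^{\T}\hat S(a)=\hat\omega_{\text{lin}}(a)^{\T}$ directly.
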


\begin{proof}[Proof of Lemma \ref{lemma:linear_gc}]
    In matrix form, equation \eqref{eq:h-moments} can be written as $\mathbb{K}_{zx}^\T D(a) \{h(a)-Y\}=0$. Using the linear form of $h(a)$, this is $\mathbb{K}_{zx}^\T D(a) \{\mathbb{K}_{wx}\eta_a-Y\}=0$. Rearranging, $\mathbb{K}_{zx}^\T D(a)  \mathbb{K}_{wx}\eta_a=\mathbb{K}_{zx}^\T D(a)  Y$. Under the invertibility assumption, $\hat\eta_a=(\mathbb{K}_{zx}^\T D(a) \mathbb{K}_{wx})^{-1}\mathbb{K}_{zx}^\T D(a) Y$. The fitted values of the outcome bridge are $\hat h(a)=\mathbb{K}_{wx}\hat\eta_a=\mathbb{K}_{wx}(\mathbb{K}_{zx}^\T D(a) \mathbb{K}_{wx})^{-1}\mathbb{K}_{zx}^\T D(a) Y$ or, defining $S(a)=\mathbb{K}_{wx}(\mathbb{K}_{zx}^\T D(a) \mathbb{K}_{wx})^{-1}\mathbb{K}_{zx}^\T D(a)$, $\hat h(a)=S(a)Y$. The proximal g-computation estimator is $\hat\mu_h(a)=n^{-1}1_n^\T\hat h(a)=n^{-1}1_n^\T S{(a)}Y$, an outcome weighted form with $\omega(a)^\T =n^{-1}1_n^\T S{(a)}= n^{-1}1_n^\T \mathbb{K}_{wx}(\mathbb{K}_{zx}^\T D(a) \mathbb{K}_{wx})^{-1}\mathbb{K}_{zx}^\T D(a)$
\end{proof}

\begin{proof}[Proof of Proposition \ref{prop:linear_pipw}]
    The conditions of lemmas \ref{lemma:linear_pipw} and \ref{lemma:linear_gc} together are the same as the conditions of \ref{prop:linear_pipw}. The weights derived both are the same: $\hat\omega_{q-lin}(a)^\T=\hat\omega_{h-lin}(a)^\T=n^{-1}1_n^\T S{(a)}= n^{-1}1_n^\T \mathbb{K}_{wx}(\mathbb{K}_{zx}^\T D(a) \mathbb{K}_{wx})^{-1}K_{zx}^\T D(a)$. Finally, the implied weights for the proximal AIPW estimator are also the same. As noted in 
\S \ref{sec_ow}, $\hat\omega_{qh}(a)^\T=\hat \omega_q(a)^{\T}- \hat \omega_q(a)^{\T}\hat S(a) + \hat \omega_h(a)^{\T}$. So, 
$\hat\omega_{qh-lin}(a)^\T=2*\hat \omega_{lin}^{\T}- \hat \omega_{lin}^{\T}\hat S(a)$, and $\hat \omega_{lin}^{\T}\hat S(a)=n^{-1}1_n^\T \mathbb{K}_{wx}(\mathbb{K}_{zx}^\T D(a) \mathbb{K}_{wx})^{-1}\mathbb{K}_{zx}^\T D(a)\mathbb{K}_{wx}(\mathbb{K}_{zx}^\T D(a) \mathbb{K}_{wx})^{-1}\mathbb{K}_{zx}^\T D(a)=n^{-1}1_n^{\T}\mathbb{K}_{wx}(\mathbb{K}_{zx}^{\T}D(a)\mathbb{K}_{wx})^{-1}\mathbb{K}_{zx}^{\T}D(a)=\omega_{lin}^{\T}.$ Therefore, 
$\hat\omega_{qh-lin}(a)^\T=2\hat \omega_{lin}(a)^{\T}- \hat \omega_{lin}(a)^{\T}=\hat \omega_{lin}(a)^{\T}$.
\end{proof}

\begin{remark}
    A caveat for Proposition \ref{prop:linear_pipw} is that the constraint that $q(Z,a,X)$ is linear for all $a\in\mathcal{A}$ is generally incompatible with the constraint that $\sum_{a\in\mathcal{A}} \pr(A=a|W,X)=1$. This is the proximal counterpart of the familiar fact that inverse-probability models cannot be specified linearly in both arms at once. 
\end{remark}


\end{appendix}

\end{document}